\pgfplotsset{compat=1.15}
\newcommand{\repeatproposition}[1]{  \begingroup
  \renewcommand{\theproposition}{\ref{#1}}  \expandafter\expandafter\expandafter\proposition
  \csname repproposition@#1\endcsname
  \endproposition
  \endgroup
  \setcounter{theorem}{\value{theorem}-1}
}
\newcommand{\repeattheorem}[1]{  \begingroup
  \renewcommand{\thetheorem}{\ref{#1}}  \expandafter\expandafter\expandafter\theorem
  \csname reptheorem@#1\endcsname
  \endtheorem
  \endgroup
  \setcounter{theorem}{\value{theorem}-1}
}
\newtheorem{theorem}{Theorem}
\newtheorem{definition}{Definition}[section]
\newtheorem{example}{Example}[section]
\DeclareMathOperator{\poly}{poly}
\newcommand{\calS}{{\mathcal{S}}}
\newcommand{\calD}{{\mathcal{D}}}
\newcommand{\swap}{{{\mathrm{swap}}}}
\newcommand{\pref}{\succ}
\newcommand{\p}{{{\mathsf{P}}}}
\newcommand{\np}{{{\mathsf{NP}}}}
\newcommand{\wone}{{{\mathsf{W[1]}}}}
\newcommand{\xp}{{{\mathsf{XP}}}}
\title{The Complexity of Subelection Isomorphism Problems\thanks{An extended abstract
of this work appears in AAAI 2022~\cite{FaliszewskiSS2021subelection}.}}
\date{}
\author[1]{Piotr Faliszewski\thanks{faliszew@agh.edu.pl}}
\author[1]{Krzysztof Sornat\thanks{sornat@agh.edu.pl}}
\author[1,2]{Stanisław Szufa\thanks{szufa@agh.edu.pl}}
\affil[1]{AGH University, Kraków, Poland}
\affil[2]{Jagiellonian University, Kraków, Poland}
\begin{document}

\maketitle

\begin{abstract}
  We study extensions of the Election Isomorphism problem, focused on
  the existence of isomorphic subelections. Specifically, we propose
  the \textsc{Subelection Isomorphism} and the \textsc{Maximum Common
  Subelection} problems and study their computational complexity and
  approximability. Using our problems in experiments, we
  provide some insights into the nature of several statistical models of elections.
\end{abstract}

\section{Introduction}

We study the computational complexity of several extensions of the \textsc{Election Isomorphism}
problem, recently introduced by~\citet{fal-sko-sli-szu-tal:c:isomorphism}
as an analogue of \textsc{Graph Isomorphism}. While in the latter we are given
two graphs and we ask if they can be made identical by renaming the
vertices, in the former we are given two ordinal elections (i.e.,
elections where each voter ranks the candidates from the most to the
least appealing one) and ask if they can be made identical by renaming
the candidates and reordering the voters. Interestingly, even though
the exact complexity of \textsc{Graph Isomorphism}, as well as of many related
problems, remains elusive~\cite{bab-daw-sch-tor:j:graphi-isomorphism},
\textsc{Election Isomorphism} has a simple polynomial-time
algorithm~\cite{fal-sko-sli-szu-tal:c:isomorphism}. Yet, in many
practical settings perfect isomorphism is too stringent and
approximate variants are necessary. For the case of \textsc{Graph Isomorphism},
researchers considered two types of relaxations:
Either they focused on making a small number of modifications to the
input graphs that make them isomorphic (see, e.g.,
the works of~\citet{arv-koe-kuh-vas:c:approximate-graph-isomorphism}
and~\citet{gro-rat-woe:c:approximate-graph-isomorphism}), or
they sought (maximum) isomorphic subgraphs of the input ones (see,
e.g., the classic paper of~\citet{coo:c:theorem-proving} and the
textbook of~\citet{gar-joh:b:int}; for an overview
focused on applications in cheminformatics we point to
the work of~\citet{ray-wil:j:max-common-subgraph-isomorphism}).
For the case of elections, the former approach was already taken
by~\citet{fal-sko-sli-szu-tal:c:isomorphism} and several other
researchers~\cite{vay-red-fla-cou:t:co-optimal-transport:nips-20,szu-fal-sko-sli-tal:c:map,boe-bre-fal-nie-szu:c:compass}.
Our goal is to explore the latter route.

More specifically, we consider the \textsc{Subelection Isomorphism}
and \textsc{Maximum Common Subelection} families of problems. In the
former, we are given two elections, a smaller and a larger one, and we
ask if it is possible to delete some candidates and voters from the
larger election so that it becomes isomorphic to the smaller one. Put
differently, we ask if the smaller election occurs as a minor in the
larger one. One reason why this problem is interesting is its
connection to restricted preference domains.
For example,
both single-peaked~\cite{bla:b:polsci:committees-elections}
and single-crossing~\citep{mir:j:single-crossing,rob:j:tax}
elections are characterized as those that do not have certain forbidden
minors~\cite{bal-har:j:characterization-single-peaked,bre-che-woe:j:single-crossing}.
We show that \textsc{Subelection Isomorphism} is $\np$-complete and
$\wone$-hard for the parameterization by the size of the smaller
election, which suggests that there are no fast algorithms for the
problem. Fortunately, the characterizations of single-peaked and
single-crossing elections use minors of constant size and,
such elections can be recognized efficiently; indeed, there are
very fast algorithms for these
tasks~\cite{bar-tri:j:stable-matching-from-psychological-model,esc-lan-ozt:c:single-peaked-consistency,elk-fal-sli:c:decloning}.
Our results show that characterizations with
non-constant minors might lead to $\np$-hard recognition problems.

In our second problem, \textsc{Maximum Common Subelection}, we
ask for the largest isomorphic subelections of the two input ones.
Since their size can be used as a (particularly demanding) measure of similarity,
our work is related to those of \citet{fal-sko-sli-szu-tal:c:isomorphism}
and \citet{szu-fal-sko-sli-tal:c:map}. In the former, the authors
define the similarity between elections using variants of the
swap distance and the Spearman footrule (and find these measures to be
intractable), whereas in the latter the authors propose
a polynomial-time computable measure, based on analyzing the frequency
with which candidates appear on particular positions in the votes. While we find that
many of our problems are $\np$-hard (and hard to approximate), we
also find polynomial algorithms, also for practically useful cases.

For both our problems, we consider their ``candidate'' and ``voter'' variants.
For example, in \textsc{Candidate Subelection Isomorphism} we ask if it is possible
to delete candidates from the larger election (but without deleting
any voters) so that it becomes isomorphic with the smaller
one. Similarly, in \textsc{Maximum Common Voter-Subelection} we ask if
we can ensure isomorphism of the two input elections by only deleting
voters (so that at least a given number of voters remains).
In Section~\ref{sec:experiments} we use this latter problem to
evaluate similarity between elections generated from various
statistical cultures.
These results confirm some findings observed by \citet{szu-fal-sko-sli-tal:c:map} and \citet{boe-bre-fal-nie-szu:c:compass} in their ``maps
of elections'' and give a new perspective on some of these statistical cultures.

In the most general variants of our problems, we assume that both
input elections are over different candidate sets and include
different voters. Yet, sometimes it is natural to assume that the
candidates or the voters are the same (for example, in a presidential
election votes collected in two different districts would have the
same candidate sets, but different voters, whereas two consecutive
presidential elections would largely involve the same voters, but not
necessarily the same candidates). We model such scenarios by variants
of our problems where either the matchings between the candidates or
the voters of the input elections are given (indeed, this way we
follow the approach
of~\citet{fal-sko-sli-szu-tal:c:isomorphism}). While one would expect
that having such matchings would make our problems easier, there are
cases where they remain $\np$-hard even with both matchings. This
stands in sharp contrast to the results
of~\citet{fal-sko-sli-szu-tal:c:isomorphism}. For a summary of our
results, see Table~\ref{tab:complexity_results}.

\begin{table*}[t]
\centering
\scalebox{0.857}{
\begin{tabular}{ c | c | c | c | c }
		\toprule
		Problem & no matching & voter-matching & cand.-matching & both matchings\\
		\midrule
		Election Isomorphism & \phantom{N.}$\p$\phantom{-com.} {\small\cite{fal-sko-sli-szu-tal:c:isomorphism}\phantom{Thm.}} & \phantom{N.}$\p$\phantom{-com.} {\small\cite{fal-sko-sli-szu-tal:c:isomorphism}\phantom{Thm.}} & $\p$ {\small\cite{fal-sko-sli-szu-tal:c:isomorphism}\phantom{Thm}} & $\p$ {\small\cite{fal-sko-sli-szu-tal:c:isomorphism}\phantom{Thm}} \\
		\midrule
		Subelection Isomorphism & $\wone$-h.\phantom{.} {\small [Thm.~\ref{thm:sub-elec-np-hard}]} & $\np$-com. {\small [Thm.~\ref{thm:subelection-voter-candidate}]}& $\p$ {\small [Thm.~\ref{thm:poly-algo}]}& $\p$ {\small [Thm.~\ref{thm:poly-algo}]} \\

		Cand.-Subelection Isomorphism & $\np$-com. {\small [Prop.~\ref{prop:can-subelection}]}& $\np$-com. {\small [Thm.~\ref{thm:subelection-voter-candidate}]}& $\p$ {\small [Thm.~\ref{thm:poly-algo}]}& $\p$ {\small [Thm.~\ref{thm:poly-algo}]}\\

		Voter-Subelection Isomorphism & \phantom{N}$\p$\phantom{-com.} {\small [Thm.~\ref{thm:poly-algo}]}& \phantom{N}$\p$\phantom{-com.} {\small [Thm.~\ref{thm:poly-algo}]}& $\p$ {\small [Thm.~\ref{thm:poly-algo}]}& $\p$ {\small [Thm.~\ref{thm:poly-algo}]}\\
		\midrule
		Max. Common Subelection & $\wone$-h.\phantom{.} {\small [Prop.~\ref{prop:mcs}]}& $\np$-com. {\small [Prop.~\ref{prop:mcs}]}& $\np$-com. {\small [Prop.~\ref{prop:mcs}]}& \phantom{[1]}$\np$-com. {\small [Prop.~\ref{prop:mcs}]}\\

		Max. Common Cand.-Subelection & $\np$-com. {\small [Prop.~\ref{prop:mccs-cm-vm}]}& $\np$-com. {\small [Prop.~\ref{prop:mccs-cm-vm}]}& $\np$-com. {\small [Prop.~\ref{prop:mccs-cm-vm}]}& $\wone$-com. {\small [Thm.~\ref{thm:common-cand-both-np-hard}]}\\

		Max. Common Voter-Subelection & \phantom{.}$\p$\phantom{-com.} {\small [Thm.~\ref{thm:poly-algo}]} & \phantom{.}$\p$\phantom{-com.} {\small [Thm.~\ref{thm:poly-algo}]}& \phantom{.}$\p$\phantom{-com.} {\small [Thm.~\ref{thm:poly-algo}]}& \phantom{[1].}$\p$\phantom{-com.} {\small [Thm.~\ref{thm:poly-algo}]}\\
		\bottomrule
\end{tabular}
}
\caption{\label{tab:complexity_results} An overview of our results;
    those for \textsc{Election Isomorphism} are due to~\citet{fal-sko-sli-szu-tal:c:isomorphism}.
    $\wone$-hardness holds with respect to the size of the smaller election or a common subelection.
    Indicated $\wone$-hard problems are also $\np$-hard.
  }
\end{table*}

\section{Preliminaries}\label{sec:prelim}

For a positive integer $k$, we write $[k]$ to denote the set
$\{1,\dots,k\}$, and by $S_k$ we refer to the set of all permutation
over $[k]$. Given a graph $G$, we write $V(G)$ to refer to its set of
vertices and $E(G)$ to refer to its set of edges.

\paragraph{Elections.}
An election is a pair $E=(C,V)$ that consists of a set
$C = \{c_1, \ldots, c_m\}$ of candidates and a collection
$V = (v_1, \ldots, v_n)$ of voters. Each voter $v\in V$ has a
preference order, i.e., a ranking of the candidates from the most to
the least appreciated one, denoted as $\pref_v$. Given two candidates
$c_i, c_j \in C$, we write $c_i \pref_v c_j$ (or, equivalently,
$v \colon c_i \pref c_j$) to denote that $v$ prefers $c_i$ to
$c_j$. We extend this notation to more than two candidates in a
natural way. For example, we write
$v \colon c_1 \pref c_2 \pref \cdots \pref c_m$ to indicate that voter
$v$ likes $c_1$ best, then $c_2$, and so on, until $c_m$. If we put
some set $S$ of candidates in such a description of a preference
order, then we mean listing its members in some arbitrary (but fixed, global)
order. Including $\overleftarrow{S}$ means listing the members of $S$
in the reverse of this order.
We often refer to the preference orders as either the votes or the
voters, but the exact meaning will always be clear from the context.
By the \textit{size} of an election, we mean the number of candidates multiplied by the number of voters.
Occasionally we discuss single-peaked elections.

\begin{definition}[\citet{bla:b:polsci:committees-elections}]
  Let $C$ be a set of candidates, and let $\lhd$ be
  a linear order over $C$ (referred to as the societal axis).
  We say that a vote $v$ is single-peaked
  with respect to~$\lhd$ if for each $k \in [m]$,
  the top $k$ candidates in $v$ form an interval within $\lhd$.
  An election is single-peaked if there is a societal axis for which
  all its votes are single-peaked.
\end{definition}

Given elections $E = (C,V)$ and $E' = (C',V')$, we say that $E'$ is a
\emph{subelection} of $E$ if $C'$ is a subset of $C$ and $V'$ can be
obtained from $V$ by deleting some voters and restricting the remaining
ones to the candidates from~$C'$.
We say that $E'$ is a \emph{voter subelection} of $E$
if we can obtain it by only deleting voters from $E$,
and that $E'$ is a \emph{candidate subelection} of~$E$
if we can obtain it from $E$ by only deleting candidates.

\paragraph{Election Isomorphism.}
Let $E$ be an election with candidate set $C = \{c_1, \ldots, c_m\}$
and voter collection $V = (v_1, \ldots, v_n)$. Further, let $D$ be
another set of $m$ candidates and let $\sigma \colon C \rightarrow D$
be a bijection between $C$ and $D$. For each voter $v \in V$, by
$\sigma(v)$ we mean a voter with the same preference order as $v$,
except that each candidate $c$ is replaced with $\sigma(c)$. By
$\sigma(V)$ we mean voter collection $(\sigma(v_1), \ldots, \sigma(v_n))$.
Similarly, given a permutation $\pi \in S_n$, by $\pi(V)$ we mean
$(v_{\pi(1)}, \ldots, v_{\pi(n)})$.

Two elections are \emph{isomorphic} if it is possible to rename their
candidates and reorder their voters so that they become
identical~\cite{fal-sko-sli-szu-tal:c:isomorphism}. Formally,
elections $(C_1,V_1)$ and $(C_2,V_2)$, are isomorphic if
$|C_1| = |C_2|$, $|V_1| = |V_2|$, and there is a bijection
$\sigma \colon C_1 \rightarrow C_2$ and a permutation
$\pi \in S_{|V_1|}$ such that $(\sigma(C_1),\sigma(\pi(V_1))) = (C_2,V_2)$.
We refer to $\sigma$ as the candidate matching and to $\pi$ as the
voter matching. In the \textsc{Election Isomorphism} problem we are
given two elections and we ask if they are isomorphic.

\paragraph{Computational Complexity.} We assume familiarity with
(parameterized) computational complexity theory; for background, we point the
readers to the textbooks of~\citet{pap:b:complexity}
and~\citet{cyg-fom-kow-lok-mar-pil-pil-sau:b:fpt}.
Most of our intractability proofs follow by reductions from the
\textsc{Clique} problem. An instance of \textsc{Clique} consists of a
graph $G$ and a nonnegative integer $k$, and we ask if $G$ contains
$k$ vertices that are all connected to each other. \textsc{Clique} is
well-known to be both $\np$-complete and $\wone$-complete, for the
parameterization by~$k$~\cite{downey1995fixed}.
Additionally, we provide some lower-bounds based on the Exponential Time Hypothesis (ETH),
which is a popular conjecture on solving the CNF-SAT problem.
For a formal statement see, e.g.,
Conjecture 14.1 in~\cite{cyg-fom-kow-lok-mar-pil-pil-sau:b:fpt}.
Specifically, in our lower-bound proofs we use a consequence of the ETH which
says that there is no $|V(G)|^{o(k)}$-time algorithm for \textsc{Clique}~\cite{ChenHKX06}.
As all the problems that we study can easily be seen to belong to $\np$,
in our $\np$-completeness proofs we only give hardness arguments.

\section{Variants of the Isomorphism Problem}
We introduce two extensions of the \textsc{Election Isomorphism} problem,
inspired by \textsc{Subgraph Isomorphism} and
\textsc{Maximum Common Subgraph}. In the former, we are given two
elections and we ask if the smaller one is isomorphic to a subelection
of the larger one. That is, we ask if we can remove some
candidates and voters from the larger election to make the two
elections isomorphic.

\begin{definition}
  An instance of \textsc{Subelection Isomorphism} consists of two
  elections, $E_1 = (C_1, V_1)$ and $E_2 = (C_2,V_2)$, such that
  $|C_1| \leq |C_2|$ and $|V_1| \leq |V_2|$. We ask if there is a
  subelection $E'$ of $E_2$ isomorphic to $E_1$.
\end{definition}
The \textsc{Voter-Subelection Isomorphism} problem is defined in the
same way, except that we require $E'$ to be a voter subelection of
$E_2$. Similarly, in \textsc{Candidate-Subelection Isomorphism} we
require $E'$ to be a candidate subelection. We often abbreviate the
name of the latter problem to \textsc{Cand.-Subelection Isomorphism}.

\begin{example}\label{ex:1}
  Consider elections $E = (C,V)$ and $F = (D, U)$, where $C=\{a,b,c\}$,
  $D = \{x,y,z,w\}$, $V=(v_1,v_2,v_3)$ and $U=(u_1,u_2,u_3)$, with 
  preference orders:
  \begin{align*}
    &v_1 \colon a \pref b \pref c,
    &u_1 \colon w \pref x \pref y \pref z,& \\
    &v_2 \colon b \pref a \pref c,
    &u_2 \colon y \pref w \pref x \pref z,& \\
    &v_3 \colon c \pref b \pref a,
    &u_3 \colon z \pref w \pref y \pref x.
  \end{align*}
  If we remove candidate $w$ from $(D,U)$, then we find that the
  resulting elections are isomorphic (to see this, it suffices to match
  voters $v_1,v_2,v_3$ with $u_1,u_2,u_3$, respectively, and
  candidates $a,b,c$ with $x,y,z$). Thus $E$ is isomorphic to a
  (candidate) subelection of $F$ and, so, $(E,F)$ is a
  \emph{yes}-instance of \textsc{(Cand.-)Subelection Isomorphism}.
\end{example}

In the \textsc{Maximum Common Subelection} problem
we seek the largest isomorphic subelections of two given ones.
We often abbreviate \textsc{Maximum} as \textsc{Max.}

\begin{definition}
  An instance of \textsc{Max. Common Subelection} consists of two
  elections, $E_1 = (C_1,V_1)$ and $E_2 = (C_2,V_2)$, and a
  positive integer $t$. We ask if there is a subelection $E'_1$
  of $E_1$ and a subelection $E'_2$ of $E_2$ such that $E'_1$ and
  $E'_2$ are isomorphic and the size of $E'_1$ (or, equivalently, the
  size of $E'_2$) is at least $t$.
\end{definition}

Analogously to the case of \textsc{Subelection Isomor\-phism},
we also consider the \textsc{Max. Common Cand.-Sub\-elec\-tion} and
\textsc{Max. Common Voter-Subelection} problems. In the former, $E'_1$
and $E'_2$ must be candidate subelections
and in the latter they need to be voter subelections
(thus in the former problem $E_1$ and $E_2$ must have the
same numbers of voters, and in the latter $E_1$ and $E_2$ must have
the same numbers of candidates).

For each of the above-defined problems we consider its variant
with or without the candidate or voter matching. Specifically, the
variants defined above are \emph{with no matchings}. Variants
\emph{with candidate matching} include a bijection~$\sigma$ that
matches (some of) the candidates in one election to (some of) those
in the other (in case of \textsc{Subelection Isomorphism}
and its variants, all the candidates from the smaller election must be
matched to those in the larger one;
in case of \textsc{Max. Common Subelection} there are no such requirements).
Then we ask for an isomorphism between respective subelections that agrees with
$\sigma$. In particular, this means that none of the unmatched
candidates remain in the considered subelections (another
interpretation is to assume that both input election have the
same candidate sets).
\begin{example}
  Consider elections $(C,V)$ and $(D, U)$ from Example~\ref{ex:1}, and
  a matching $\sigma$ such that $\sigma(a)=x, \sigma(b)=w$, where $c$,
  $y$, and $z$ are unmatched.
  After applying it and dropping the unmatched candidates,
  the votes in the first election become
  $v_1 \colon x \pref w$, $v_2 \colon w \pref x$, and
  $v_3 \colon w \pref x$, whereas all the voters in the second election
  have preference order $w \pref x$. Thus, this instance of
  \textsc{Max. Common Subelection with Candidate Matching} has
  isomorphic subelections, respecting the matching $\sigma$, of size $2 \cdot 2 = 4$.
\end{example}
Variants \emph{with voter matching} are defined similarly: We are
given a matching between (some of) the voters from one election and
(some of) the voters from the other (and, again, for
\textsc{Subelection Isomorphism} and its variants, each voter from the
smaller election is matched to some voter in the larger one). The
sought-after isomorphism must respect this matching (again, this means
that we can disregard the unmatched voters).

Variants \emph{with both matchings} include both the matching between
the candidates and the matching between the voters (note that these
variants are not trivial because we still need to decide who to
delete). By writing \emph{all four matching cases} we mean the four
just-described variants of a given problem.

Finally, we note that each variant of \textsc{Max. Common Subelection}
is at least as computationally difficult as its corresponding variant
of \textsc{Subelection Isomorphism}.

\begin{repproposition}{M reduces to S}
  \label{prop:reduction}
  Let $M$ be a variant of \textsc{Max. Common Subelection} and let $S$
  be a corresponding variant of \textsc{Subelection Isomorphism}. We
  have that $S$ many-one reduces to $M$ in polynomial time.
\end{repproposition}

\section{Computational Complexity Analysis}\label{sec:comp-complexity}

In this section, we present our complexity results. While in most cases
we obtain intractability (see Table~\ref{tab:complexity_results} for a
summary of our results), we find that all our problems focused on
voter subelections are solvable in polynomial time, and having
candidate matchings leads to polynomial-time algorithm for all the
variants of \textsc{Subelection Isomorphism}.
Some proofs were relegated to the Appendix~\ref{appendix:proofs} for readability.

All our polynomial-time results rely on the trick
used by~\citet{fal-sko-sli-szu-tal:c:isomorphism} for the
case of \textsc{Election Isomorphism}. The idea is to guess a pair
of (matched) voters and use them to derive the candidate matching.

\begin{reptheorem}{Polynomial-time algorithm results}
\label{thm:poly-algo}
  \textsc{Voter-Subelection Isomorphism} and
  \textsc{Max. Common Voter-Subelection} are in $\p$ for all four matching
  cases. \textsc{Subelection Isomorphism},
  \textsc{Cand.-Subelection Isomorphism} are in $\p$ for cases
  with candidate matchings.
\end{reptheorem}

\subsection{Intractability of Subelection Isomorphism}

Next we show computational hardness for all the remaining variants of our problems.
In this section we consider \textsc{Subelection Isomorphism}.

\begin{theorem}\label{thm:sub-elec-np-hard}
  \textsc{Subelection Isomorphism} is $\np$-com\-plete and
  $\wone$-hard with respect to the size of the smaller election.
\end{theorem}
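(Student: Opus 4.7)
The plan is to give a polynomial-time many-one reduction from \textsc{Clique}, which yields both $\np$-hardness and, because the reduction is parameter-preserving in the sense detailed below, $\wone$-hardness with respect to $|E_1|$. Membership in $\np$ is routine: one guesses the candidate bijection $\sigma$, the voter matching $\pi$, and the voters and candidates to delete, then verifies the isomorphism in polynomial time.

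Given an instance $(G,k)$ of \textsc{Clique} with $V(G)$ linearly ordered by some fixed $<$, I construct two elections. The larger election $E_2$ has candidate set $V(G) \cup \{\alpha_0, \alpha_1, \alpha_2\}$ (three fresh ``separator'' candidates), and for each edge $\{u,v\}\in E(G)$ with $u<v$ a voter $V_{uv}$ with preference $\alpha_0 \pref u \pref \alpha_1 \pref v \pref \alpha_2$ followed by the remaining vertices of $V(G)$ in global order. The smaller election $E_1$ mirrors this layout: its candidate set is $\{y_1,\dots,y_k\} \cup \{\alpha_0,\alpha_1,\alpha_2\}$ with canonical order $y_1<\dots<y_k$, and for each pair $1 \le i<j\le k$ there is a voter $v_{ij}$ with preference $\alpha_0 \pref y_i \pref \alpha_1 \pref y_j \pref \alpha_2$ followed by the remaining $y_\ell$ in canonical order. (A trivial no-instance is emitted if $|V(G)|<k$ or $|E(G)|<\binom{k}{2}$.) Observe that $|E_1| = (k+3)\binom{k}{2} = \Theta(k^3)$ is bounded by a function of $k$ alone, so $\wone$-hardness of \textsc{Clique} parameterized by $k$ transfers to $\wone$-hardness of \textsc{Subelection Isomorphism} parameterized by $|E_1|$.

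The forward direction is immediate: given a $k$-clique $S = \{s_1<\dots<s_k\}$, the subelection of $E_2$ that keeps $C' = \{\alpha_0,\alpha_1,\alpha_2\} \cup S$ and exactly the voters $V_{s_is_j}$ for $1\le i<j\le k$ is isomorphic to $E_1$ via the bijection $y_\ell \mapsto s_\ell$, $\alpha_l \mapsto \alpha_l$. For the backward direction, suppose $E'$ is a subelection of $E_2$ isomorphic to $E_1$. The key observation is that in $E_1$ each separator $\alpha_l$ sits at a \emph{fixed} odd position (namely $1$, $3$, $5$) of every voter, whereas every $y_\ell$ appears at many different positions across voters. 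Matching this positional signature under $\sigma$ forces $\alpha_0,\alpha_1,\alpha_2 \in C'$ with $\sigma(\alpha_l) = \alpha_l$, and then every retained voter of $E'$ must be the restriction of some $V_{uv}$ with \emph{both} $u,v \in C'$---any ``leakage'' restriction deleting one endpoint would collapse two separators into adjacent positions and contradict the fixed odd-position signature. With leakage ruled out and $S := C' \cap V(G)$, the common tail ``remaining vertices in global order'' forces $\sigma$ restricted to $\{y_\ell\}$ to be order-preserving onto $S$; then for every pair $i<j$, the existence of a voter of $E'$ matching $\sigma(v_{ij})$ says precisely that $\{s_i,s_j\}\in E(G)$, so $S$ is a $k$-clique.

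The main obstacle I expect is exactly this rigidity argument. A naive encoding without separators (where each edge-voter simply reads $u\pref v\pref \text{rest}$) admits a spurious isomorphism: a voter $V_{uv}$ with $u=s_1\in S$ (the globally smallest element) but $v\notin S$ restricts to a preference identical to $\sigma(v_{12})$, so one can ``fake'' the pair $\{s_1,s_2\}$ even when $\{s_1,s_2\}$ is not an edge of $G$, and the reduction then merely enforces a near-clique rather than a clique. Interleaving separators at the fixed odd positions of every vote shifts the problem into a rigid regime where any such leakage is detectable, and it is this insight that closes the reduction.
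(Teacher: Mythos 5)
Your construction is a reasonable variant of the paper's (the paper also reduces from \textsc{Clique} via edge-voters padded with special ``separator'' candidates), and the forward direction, the $\np$-membership remark, and the parameter bound $|E_1|=\Theta(k^3)$ are all fine. The gap is in the backward direction: the entire argument rests on the claim that ``matching the positional signature forces $\alpha_0,\alpha_1,\alpha_2\in C'$ with $\sigma(\alpha_l)=\alpha_l$,'' and this claim is asserted rather than proved --- and it is in fact false as stated. Concretely, take any $k$-clique $T=\{t_1<\dots<t_k\}$ in $G$ together with a vertex $s<\min T$, and let $C'=\{\alpha_0,\alpha_1,s\}\cup T$ with exactly the voters $V_{t_it_j}$ retained: each restricted vote reads $\alpha_0\succ t_i\succ\alpha_1\succ t_j\succ s\succ(T\setminus\{t_i,t_j\})$, which is isomorphic to $E_1$ via $\sigma(\alpha_2)=s$ and $\sigma(y_\ell)=t_\ell$, even though $\alpha_2\notin C'$. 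This particular failure is benign (a clique exists anyway), but it shows that your key lemma does not hold, and that a correct proof must instead carry out a genuine case analysis over which of $\alpha_0,\alpha_1,\alpha_2$ survive into $C'$ and which candidates can occupy the fixed positions $1$, $3$, $5$ of all $\binom{k}{2}$ retained votes (ruling out, e.g., that a single vertex $w$ or a forward-shifted $\alpha_1$ sits at position $1$ everywhere, that all retained edges share an endpoint, that position $4$ is occupied by too few distinct candidates, and so on). None of this analysis appears, and the one sentence you do offer (``leakage would collapse two separators into adjacent positions'') presupposes exactly what needs proving, namely that the separators are all present and matched identically. I believe your construction can be repaired by such an analysis for $k$ above a small constant, but as written the backward direction is not established.

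For comparison, the paper sidesteps this difficulty by using only two special candidates $\alpha,\beta$ at positions $3$ and $4$ and, crucially, \emph{four} voters per edge realizing both orders of $\{x,y\}$ and both orders of $\{\alpha,\beta\}$. The $\{\alpha\succ\beta,\ \beta\succ\alpha\}$ flip immediately certifies which two candidates occupy positions $3$--$4$ (no pair of vertex candidates ever appears there in both orders), and the $\{x\succ y,\ y\succ x\}$ flip removes any reliance on a global vertex order. That built-in symmetry is what makes the rigidity argument short; your single-voter-per-edge design with three separators trades it for a much longer case analysis that the proposal does not supply.
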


\begin{proof}
  Before we describe our reduction, we first provide a method for
  transforming a graph into an election: For a graph $H$, we let $E_H$
  be an election whose candidate set consists of the vertices of $H$
  and two special candidates, $\alpha_H$ and $\beta_H$, and whose
  voters correspond to the edges of $H$. Specifically, for each edge
  $e = \{x,y\} \in E(H)$ we have four voters, $v^1_e, \ldots v^4_e$,
  with preference orders:
  \begin{align*}
    v^1_e \colon& x \pref y \pref \alpha_H \pref \beta_H \pref V(H) \setminus \{x,y\}, \\
    v^2_e \colon& x \pref y \pref \beta_H \pref \alpha_H \pref V(H) \setminus \{x,y\}, \\
    v^3_e \colon& y \pref x \pref \alpha_H \pref \beta_H \pref V(H) \setminus \{x,y\}, \\
    v^4_e \colon& y \pref x \pref \beta_H \pref \alpha_H \pref V(H) \setminus \{x,y\}.
  \end{align*}
  We give a reduction from \textsc{Clique}. Given an instance $(G,k)$
  of \textsc{Clique}, where $G$ has at least $k$ vertices and
  $\binom{k}{2}$ edges,
  we let $K$ be a size-$k$ complete graph and we form an instance
  $(E_K, E_G)$ of \textsc{Subelection Isomorphism}. The reduction
  runs in polynomial time and it remains to show correctness.

  First, let us assume that $G$ has a size-$k$ clique. Let $X$ be the
  set of its vertices and let $Y$ be the set of its edges. We form a
  subelection $E'$ of $E_G$ by removing all the candidates that are
  not in $X \cup \{\alpha_G, \beta_G\}$ and removing all the voters that
  do not correspond to the edges from $Y$. One can verify that $E'$ and
  $E_K$ are, indeed, isomorphic.

  Second, let us assume that $E_K$ is isomorphic to some subelection
  $E'$ of $E_G$. We will show that this implies that $G$ has a
  size-$k$ clique. First, we claim that $E'$ includes both
  $\alpha_G$ and $\beta_G$. To see why this is so, consider the
  following two cases:
  \begin{enumerate}
  \item If $E'$ contained exactly one of $\alpha_G, \beta_G$, then this
    candidate would appear in every vote in $E'$ among the top three
    positions. Yet, in $E_K$ there is no candidate with this property
    so $E'$ and $E_K$ would not be isomorphic.
  \item If $E'$ contained neither $\alpha_G$ nor $\beta_G$ then every
    vote in $E'$ would rank some vertex candidates $z$ and $w$ on
    positions three and four
    (to be able to match $\alpha_K$ and $\beta_K$ to them).
    However, by the construction of $E_G$,
    either in every vote from $E'$ we would have $z \pref w$ or in
    every vote from $E'$ we would have $w \pref z$. Since in $E_K$
    half of the voters rank the candidates from positions three and
    four in the opposite way, $E'$ and $E_K$ would not be isomorphic.
  \end{enumerate}
  Thus $\alpha_G$ and $\beta_G$ are included in $E'$.
  Moreover $\alpha_G$ and $\beta_G$ are matched with
  $\alpha_K$ and $\beta_K$ because they are the only
  candidates from $E_G$ that can appear on positions three and four
  in every vote in $E'$ but possibly in different order.
  As a consequence, for each vote $v$ from $E_G$ that appears in $E'$,
  the candidate set of $E'$ must include
  the two candidates from $V(G)$ that $v$ ranks
  on top (if it were not the case, then $E'$ would contain a
  candidate---either $\alpha_G$ or $\beta_G$---that appeared in all
  the votes within the top four positions and in some vote within top
  two positions; yet, $E_K$ does not have such a candidate).
  This means that for each edge $e \in E(G)$, if $E'$ contains some
  voter $v^i_e$ for $i \in [4]$, then it also contains the other
  voters corresponding to $e$ (otherwise $E'$ and $E_K$ would not be
  isomorphic).
  Because the number of voters in $E_K$ is $4\binom{k}{2}$,
  hence the number of distinct chosen (in $E'$) corresponding edges
  from $G$ equals $\binom{k}{2}$.
  As said before, for each such chosen edge, we also choose two
  corresponding vertices as candidates.
  It means that the number of chosen candidates
  (except $\alpha_G$ and $\beta_G$) is between $k$ and $2\binom{k}{2}$.
  However, the number of candidates in $E_K$ except $\alpha_K$ and
  $\beta_K$ is $k$, therefore we conclude that chosen vertex-candidates
  form a size-$k$ clique in $G$.
  This completes the proof of $\np$-hardness.

  To show $\wone$-hardness, note that the number of candidates and
  voters in the smaller election equals $k+2$ and $4\binom{k}{2}$
  respectively, hence the size of the smaller election is a function of
  parameter $k$ for which \textsc{Clique} is $\wone$-hard.
\end{proof}

The above reduction shows something stronger than the theorem claims.
Indeed, assuming ETH one cannot compute a solution faster than a straightforward brute-force approach.

\begin{repproposition}{Subelection Isomorphism ETH-lowerbound}
  \textsc{Subelection Isomorphism} has an $O^*(m^k)$-time algorithm,
  where $k$ is the number of candidates in the smaller election and
  $m$ is the number of candidates in the larger election
  (hence $\xp(k)$).
  Moreover, assuming ETH, there is no $O^*(m^{o(k)})$-time algorithm.
\end{repproposition}

As a consequence of the above $\xp$ algorithm,
even testing if a fairly small, constant-sized,
election is isomorphic to a subelection of some bigger one may require
a polynomial-time algorithm with impractically large
exponents. Luckily, in practice this is not always the case. For
example, single-peaked elections are characterized as exactly those
that do not have subelections isomorphic to certain two elections of
sizes $8$ and $9$~\cite{bal-har:j:characterization-single-peaked}, but
there is an algorithm for deciding if a given election is
single-peaked that runs in time $O(nm)$, where $m$ is the number of
candidates and $n$ is the number of
voters~\cite{esc-lan-ozt:c:single-peaked-consistency}.
For single-crossing elections, such a characterization uses subelections
of sizes up to 18, but there is a recognition algorithm that runs in
time $O(n^2 +nm\log m)$~\cite{doi-fal:j:unidimensional}, which can
even be improved using appropriate data structures.

Next we consider \textsc{Cand.-Subelection Isomorphism}.
In this problem both elections have the same number of voters and
we ask if we can delete candidates from the one that has more,
so that they become isomorphic.
We first show that this problem is $\np$-complete for the
case where the voter matching is given (which also proves the same
result for \textsc{Subelection Isomorphism with Voter Matching}) and
next we describe how this proof can be adapted to the variant without
any matchings (the variant with candidate matching is in $\p$ and was
considered in the preceding section).

\begin{reptheorem}{Cand-Subelection with voter matching hardness}\label{thm:subelection-voter-candidate}
  \textsc{Subelection Isomorphism with Voter Matching} and
  \textsc{Cand.-Subelection Isomorphism with Voter Matching} are
  $\np$-complete.
\end{reptheorem}

\textsc{Cand.-Subelection Isomorphism}
remains $\np$-com\-plete also without the voter matching.
By doubling the voters and using a few extra candidates we ensure that
only the intended voter matching is possible.

\begin{repproposition}{Cand.-Subelection Isomorphism problem}
\label{prop:can-subelection}
\textsc{Cand.-Subelection Isomorphism} is $\np$-complete.
\end{repproposition}

\subsection{Intractability of Max. Common Subelection}

Perhaps the most surprising result regarding
\textsc{Max. Common Subelection} is that it is
$\np$-complete even when both matchings are given.
The surprise stems from the fact that all generalizations
of \textsc{Election Isomorphism} considered
by~\citet{fal-sko-sli-szu-tal:c:isomorphism} are solvable in
polynomial-time in this setting. We first show this result for
candidate subelections.

\begin{reptheorem}{Common-Cand with both matching hardness}\label{thm:common-cand-both-np-hard}
\textsc{Max. Common Cand.-Subelection with Both Matchings} is $\np$-complete and $\wone$-complete with respect to the candidate set size of isomorphic candidate subelections.
\end{reptheorem}

\begin{proof}
  We give a reduction from the \textsc{Clique} problem
  which idea is to decode the adjacency matrix of a given graph
  by a pair of elections with both matchings defined.
  Missing edges in a graph we decode as a conflict on candidates
  ordering within matched voters.

  Formally, given an instance $(G,k)$ of \textsc{Clique}, we form
  two elections, $E_1 = (C,V_1)$ and $E_2 = (C,V_2)$,
  where $C = V(G)$.
  Since we need to provide an instance with a candidate matching,
  we simply specify both elections over the same candidate set.
  Without loss of generality, we assume that $V(G) = \{1, \ldots, n\}$.
  For each $x \in V(G)$ we define the neighborhood of $x$ in $G$
  as $N(x) = \{y \in V(G): \{x,y\} \in E(G)\}$ and the set of
  non-neighbors as $M(x) = V(G) \setminus \{N(x) \cup \{x\}\}$.

  For each vertex $x \in V(G)$ we define two matched voters,
  $v^1_x$ in $E_1$ and $v^2_x$ in $E_2$, defined as follows:
  \begin{align*}
    &v^1_x \colon M(x) \pref x \pref N(x),&
    &\text{and}
    &v^2_x \colon x \pref M(x) \pref N(x).&
  \end{align*}
  We ask if $E_1$ and $E_2$ have isomorphic candidate subelections
  that contain at least~$k$ candidates each.
  Intuitively, in a solution to the problem one have to remove
  either $x$ or all vertices from $M(x)$.
  It is a direct definition of a clique: Either $x$ is not in
  a clique or all its non-neighbors are not in a clique.
  It is clear that the reduction can be computed in polynomial time
  and it remains to show its correctness.

  First let us assume that $G$ has a size-$k$ clique. Let $K$ be the
  set of this clique's vertices. We form elections $E'_1$ and $E'_2$
  by restricting $E_1$ and $E_2$ to the candidates from $K$.
  To verify that $E'_1$ and $E'_2$ are isomorphic via the given
  matchings, let us consider an arbitrary pair of matched voters
  $v^1_x$ and $v^2_x$.
  If $x$ is not included in $K$ then preference orders of $v^1_x$
  and $v^2_x$ restricted to $K$ are identical.
  Indeed, removing even only $x$ from the set of candidates
  makes $v^1_x$ and $v^2_x$ identical.
  Otherwise, if $x$ is in $K$ then $K \cap M(x) = \emptyset$ as
  $K$ is a clique. Therefore, removing $M(x)$ from the set of
  candidates makes $v^1_x$ and $v^2_x$ identical.

  For the other direction, let us assume that there are subelections
  $E'_1$ and $E'_2$ of $E_1$ and $E_2$, respectively, each with
  candidate set $K$, such that $|K| \geq k$ and $E'_1$ and $E'_2$ are
  isomorphic via the given matchings. It must be the case that the
  vertices from $K$ form a clique because if $K$ contained two
  vertices $x$ and $y$ that were not connected by an edge, then votes
  $v^1_x$ and $v^2_x$ would not be identical.
  Indeed, then we would have $y \pref x$ in $v^1_x$ and
  $x \pref y$ in $v^2_x$, respectively, when restricted to candidates
  from $K$. This completes the proof of $\np$-hardness.

  To show $\wone$-hardness, note that the required number of
  candidates in isomorphic candidate subelections is equal to the
  parameter $k$ for which \textsc{Clique} is $\wone$-hard.

  For membership in $\wone$ we show a reduction to \textsc{Clique} with equal value of the parameters.
  Its idea is to create a complete graph in which vertices correspond to candidates and then remove edges based on conflicted voters.
  More precisely, for every two matched voters $v$ and $u$ and every two candidates $x$ and $y$ such that $x \pref_v y$ and $y \pref_u x$, we remove an edge $\{x,y\}$ from the graph.
  It means that both vertices corresponding to $x$ and $y$ cannot stay in a solution of \textsc{Clique}.
  Indeed, as we cannot delete voters, the only way to resolve this conflict is to remove a candidate $x$ or $y$ (or both).
  A formal reduction is available in the Appendix~\ref{appendix:proofs}.
\end{proof}

The above reduction can be used to show strong hardness
results which transfer from \textsc{Clique}.
In particular, a brute-force algorithm is essentially the best possible for exact computation.
What is more, a trivial approximation algorithm which returns a constant size solution (hence the approximation ratio is $O(m)$) is also essentially optimal.

\begin{repproposition}{Max-Cand-Both XP}\label{prop:com-cand-both-xp}
  \textsc{Max. Common Cand.-Subelection with Both Matchings} has an $O^*(m^k)$-time algorithm,
  where $k$ is the number of candidates in isomorphic candidate subelections and $m$ is the number of candidates in the input (hence $\xp(k)$).
  Moreover, assuming ETH, there is no $O^*(m^{o(k)})$-time algorithm.
\end{repproposition}

\begin{repproposition}{Max-Cand-Both approximation}\label{prop:com-cand-both-apx}
  \textsc{Max. Common Cand.-Subelection with Both Matchings}
  has an $t/c$-approximation algorithm for any constant $c \geq 1$,
  where $t$ is the maximum number of candidates in isomorphic candidate subelections.
  Moreover, approximating the problem within $t^{1-\epsilon}$ factor,
  for every $\epsilon > 0$, is $\np$-hard.
\end{repproposition}

All the remaining variants of \textsc{Max. Common Cand.-Subelection}
also are $\np$-complete. The proofs either follow by applying
Proposition~\ref{prop:reduction} or by introducing candidates that
implement a required voter matching. In the latter case,
$\wone$-hardness does not follow from this reduction as we introduce
dummy candidates that have to be included in a solution, but their
number is not a function of the \textsc{Clique} parameter (clique
size).

\begin{repproposition}{Max. Common Cand.-Subelection problem with
    candidate matching}
    \label{prop:mccs-cm-vm}
    \textsc{Max. Common Cand.-Subelection} is $\np$-complete and so are its
    variants with a given candidate matching and with a given voter matching.
\end{repproposition}

Similarly to all four matching cases of
the \textsc{Max. Common Cand.-Subelection}, all four matching cases of the
\textsc{Max. Common Subelection} also are $\np$-complete.

\begin{repproposition}{All 4 MCS}
\label{prop:mcs}
All four matching cases of the \textsc{Max. Common Subelection} are $\np$-complete.
\end{repproposition}

\section{Experiments}\label{sec:experiments}

In this section we use the \textsc{Max. Common Voter-Sub\-elec\-tion} problem to analyze
similarity between elections generated from various
statistical models. While \textsc{Max. Common Voter-Subelection} has
a polynomial-time algorithm, it is too slow for our purposes.
Thus we have expressed it as an integer linear program (ILP) and
we were solving it using the CPLEX ILP solver.
A formal ILP formulation is available in the Appendix~\ref{appendix:ilp}.
The source code used for the experiments is available in a GitHub
repository\footnote{\url{https://github.com/Project-PRAGMA/Subelections-AAAI-2022}}.

We stress that we could have used other problems from the \textsc{Max. Common Subelection} family in this section.
We chose \textsc{Max. Common Voter-Subelection} because its outcomes are particularly easy to interpret.

Our findings are similar to those of
\citet{szu-fal-sko-sli-tal:c:map}, but our claims of similarity
between statistical cultures are stronger, whereas our dissimilarity
claims are weaker.

\subsection{Statistical Models of Elections}

We consider the following models for generating elections:

\begin{figure*}[t]
     \centering
     \scalebox{0.925}{
        {\includegraphics[width=5.95cm]{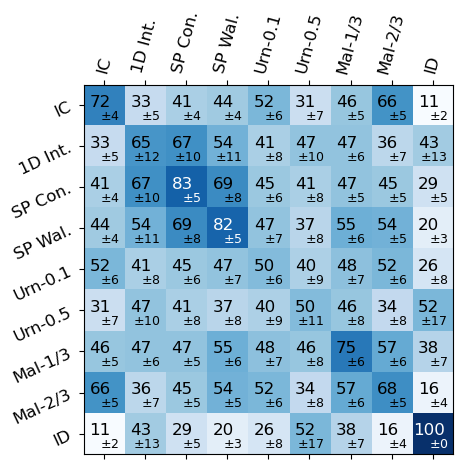}}\quad
        {\includegraphics[width=5.07cm]{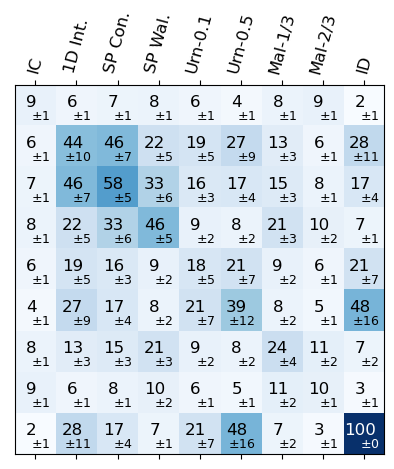}}\quad
        {\includegraphics[width=5.95cm]{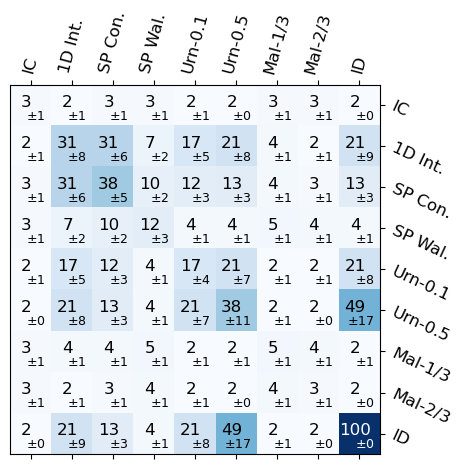}}
    }
    \caption{\label{fig:experiment} The large numbers denote the rounded \% of matched votes for \textsc{Max. Common Voter-Subelection}. The small numbers denote the rounded standard deviation. In the left/center/right matrix there are results for 4/7/10 candidates.}
\end{figure*}

\begin{figure}[t]
    \centering
    \scalebox{1.0}{
        {\includegraphics[width=7cm]{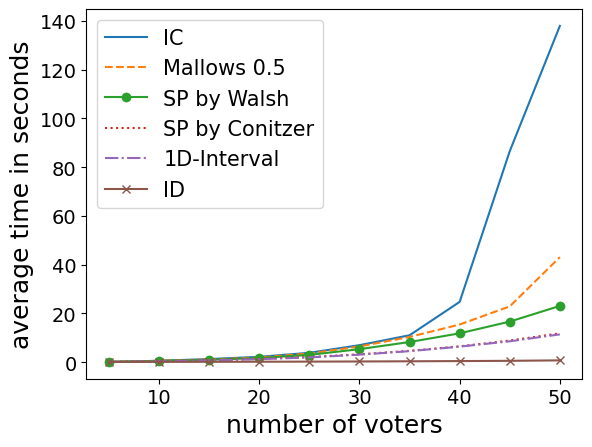}}\quad
        {\includegraphics[width=7cm]{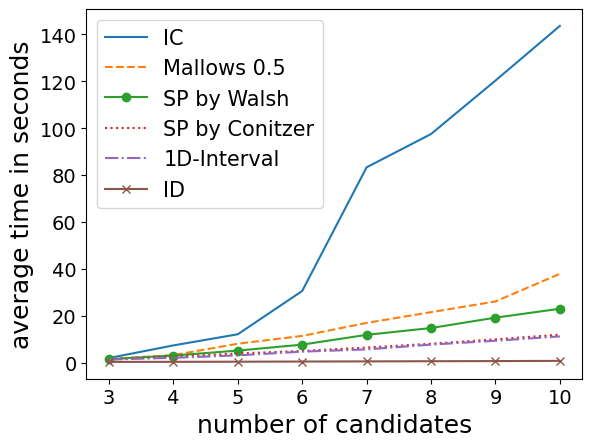}}\quad
    }
    \caption{\label{fig:experiment_time} Average time needed to find the maximum common voter subelections. With the fixed number of candidates (left), and fixed number of voters (right).}
\end{figure}

\begin{description}

\item[Identity.] Under the Identity model (ID), we choose a single
  preference order uniformly at random and then all the generated
  votes are equal to it.

\item[Impartial Culture.] Under Impartial Culture (IC), we
  generate each preference order uniformly at random.

\item[Pólya-Eggenberger Urn Model.] The Pólya-Eggenberger Urn
  Model~\cite{berg1985paradox} is parameterized by a nonnegative value
  $\alpha \in [0,\infty)$, which specifies the degree of correlation
  between the votes~\cite{mcc-sli:j:similarity-rules}. We start with
  an urn containing exactly one copy of each possible preference
  order; then we generate the votes one by one, by drawing a
  preference order from the urn (which we assume to be the generated
  vote) and replacing it with $\alpha \cdot m!$ duplicates, where $m$
  is the number of candidates in the election.

\item[Mallows Model.] The Mallows Model~\cite{mal:j:mallows} is
  parameterized by a value $\phi \in [0,1]$ and a central preference
  order $v_c$, which we choose uniformly at random. The probability of
  generating a preference order $v$ is proportional to
  $\phi^{\swap(v_c, v)}$, where $\swap(v_c, v)$ is the minimum number
  of swaps of adjacent candidates needed to transform~$v$ into $v_c$.
  In our experiments we do not set the value of $\phi$ directly, but
  we use the parameterization by $\mathrm{norm\hbox{-}}\phi$ proposed
  by~\citet{boe-bre-fal-nie-szu:c:compass}
  (specifically, they used
  $\mathrm{rel}\hbox{-}\phi = 0.5\cdot\mathrm{norm}\hbox{-}\phi$).
  It works as follows: For a given $\mathrm{norm\hbox{-}}\phi$ value
  and a given number $m$ of candidates in the election to be
  generated, we choose value $\phi$ so that for a generated
  vote $v$ the expected value of $\swap(v_c,v)$ is equal to
  $\mathrm{norm\hbox{-}}\phi$ times half the total number of swaps
  possible in preference orders over $m$ candidates.
  Thus using $\mathrm{norm}\hbox{-}\phi = 1$ means generating votes according to
  the IC model, using $\mathrm{norm}\hbox{-}\phi = 0$ means using the
  identity model, and using $\mathrm{norm}\hbox{-}\phi = 0.5$ means
  using a model that in a certain formal sense is exactly between
  these two extremes.
  See the work of~\citet{lu-bou:j:sampling-mallows} for an effective
  sampling algorithm.

\item[1D Interval Model.] The candidates and the voters are
  points drawn uniformly at random from a unit interval.
  Each voter~$v$ ranks the candidates with respect to increasing Euclidean
  distances of their points from that of $v$.

\end{description}

We also use the models of
\citet{wal:t:generate-sp} and \citet{con:j:eliciting-singlepeaked}
that generate single-peaked elections (it is also well-known that all 1D
Interval elections are single-peaked).
Given a societal axis, these models work as follows:
\begin{description}

\item[Walsh Model.] Each single-peaked vote, for a given axis, is drawn with equal probability
  (we use the sampling algorithm of \citet{wal:t:generate-sp}).

\item[Conitzer Model.] Under the Conitzer model, to generate a vote we
  start by choosing the top candidate uniformly at random. Then we
  keep on extending the vote with either of the two candidates right
  next to the already selected one(s) on the axis, depending on a
  coin toss.
\end{description}

\subsection{Results and Analysis}
We consider elections with $4$, $7$, or $10$ candidates and with
$50$ voters.
For each scenario and
each two of the above-described models, we have generated $1000$ pairs
of elections (for urn elections, we used $\alpha \in \{0.1, 0.5\}$ and
for the Mallows model, we used
$\mathrm{norm}$-$\phi \in \{\nicefrac{1}{3}, \nicefrac{2}{3}\}$). For each pair of
models, we recorded the average number of voters in the maximum common
voter subelections (normalized by fifty, i.e., the number of voters in
the original elections), as well as the standard deviation of this
value. We show our numerical results in Figure~\ref{fig:experiment}
(each cell corresponds to a pair of models; the number in the top-left
corner is the average, and the one in the bottom-right corner is the
standard deviation). Note that the matrices in
Figure~\ref{fig:experiment} are symmetric (results for models $A$ and
$B$ are the same as for models $B$ and $A$).

For the case with four candidates,
we see that the level of similarity between elections from
various models is quite high and drops sharply as the number of candidates increases.
This shows that for experiments with very few candidates it is not as
relevant to consider very different election models, but for
more using diverse models is justified.

The above notwithstanding, some models remain
similar even for 7 or 10 candidates. This is particularly
visible for the case of single-peaked elections. The 1D-Interval model
remains very similar to the Conitzer model, and the Walsh model is
quite similar to these two for up to $7$ candidates, but for 10
candidates starts to stand out (the result for 10 candidates is also
visible in the experiments of~\citet{szu-fal-sko-sli-tal:c:map};
the similarity for fewer candidates is a new finding).

We also note that the urn models remain relatively similar to each
other (and to the 1D-Interval and Conitzer models) for all numbers of
candidates, but this is not the case for the Mallows models. One
explanation for this is that the urn model proceeds by copying some of
the votes already present in the election, whereas the Mallows model
generates votes by perturbing the central one. The former leads to
more identical votes in an election. Indeed, to verify this it
suffices to consider the ``ID'' column (or row) of the matrix: The
similarity to the identity elections simply shows how often the most
frequent vote appears in elections from a given model. For 10
candidates, urn elections with $\alpha \in \{0.1, 0.5\}$ have, on
average, $21\pm8\%$ and $49\pm17\%$ identical votes, respectively. For
Mallows elections, this value drops to around $2\%$ (in our
setting, this means 1 or 2 voters, on average).

Finally, we consider the diagonals of the matrices in
Figure~\ref{fig:experiment}, which show self-similarity of our models.
Intuitively, the larger are these values, the fewer elections of a
given type one needs in an experiment.
Single-peaked elections stand out here for all numbers of candidates,
whereas the urn models become more prominent for larger candidate sets.

We have also analyzed the average running time that CPLEX\footnote{We ran CPLEX
on a single thread (Intel(R) Xeon(R) Platinum 8280 CPU @ 2.70GH)
of a 448 thread machine with 6TB of RAM.} needed to
find the maximum common voter subelections.
We focus on IC, Identity, Walsh model,
Conitzer model, Mallows model with norm-$\phi=0.5$, and
1D-Interval. First, we generated $500$ pairs of elections from each
model with $10$ candidates and $5,10,\dots,45,50$ voters, and
calculated the average time needed to find the maximum common voter
subelections. Second, we fixed the number of voters to $50$ and
generated elections with $3,4,\dots,9,10$ candidates. The results are
presented in Figure~\ref{fig:experiment_time}. As we increase the
number of voters, the time seems to increase exponentially. We observe
large differences between the models, with IC being by far the
slowest. Conitzer model and Walsh model are significantly different
from one another, even though both of them generate single-peaked
elections. Moreover, the fact that 1D-Interval and Conitzer models
need on average the same amount of time confirms their similarity.

\section{Conclusions}
We have shown that variants of \textsc{Election Isomorphism} that are
based on considering subelections are largely intractable but,
nonetheless, some of them can be solved in polynomial-time.
Indeed, we have used the polynomial-time solvable
\textsc{Maximum Common Candidate Subelection} problem to analyze similarity
between various different models of generating random elections.

In Section~\ref{sec:comp-complexity} we have classified variants of the problem as either
belonging to $\p$ or being $\np$-complete (and some being $\wone$-hard).
For some variants of our problems we have shown strong inapproximability results and matching approximation algorithms.
However, it would also be desirable to consider the parameterized complexity of these problem for the remaining variants.

It would be interesting to consider
variants of our problems where instead of requiring identical
preference orders among matched voters, we might ask for similar ones
(e.g., within a given swap distance). Another interesting research
direction is to repeat our experiments on real-life elections.

\section*{Acknowledgments}
We would like to thank the anonymous reviewers for their valuable comments.
Stanis{\l}aw Szufa was supported by the National Science Centre, Poland (NCN) grant No 2018/29/N/ST6/01303.
This project has received funding from the European
Research Council (ERC) under the European Union’s Horizon 2020
research and innovation programme (grant agreement No 101002854).

\begin{center}
    \includegraphics[width=3cm]{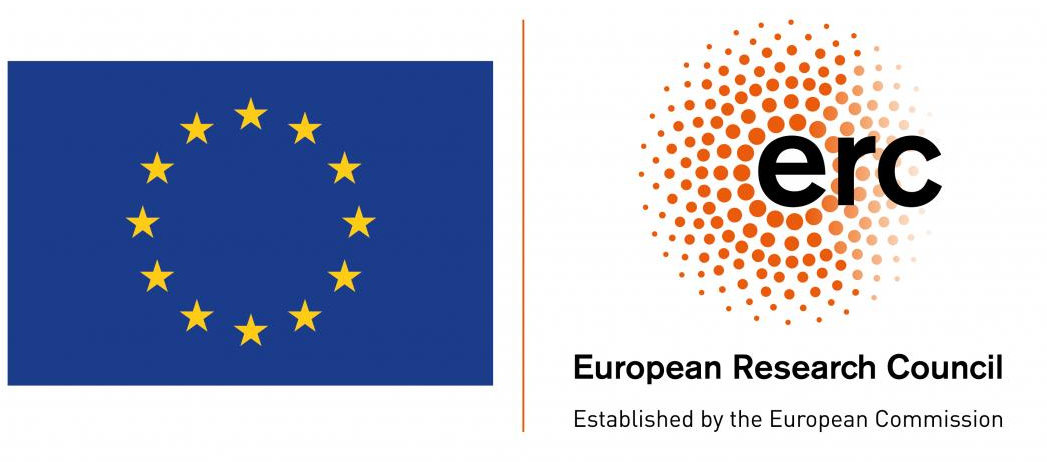}
\end{center}

\bibliographystyle{plainnat}
\bibliography{bib.bib}

\clearpage

\appendix

\section{Missing Proofs}\label{appendix:proofs}

Below we provide the missing proofs from the main text.

\repeatproposition{M reduces to S}

\begin{proof}
  We are given a problem $M$ which is a variant of \textsc{Max. Common
  Subelection}, and problem $S$, which is an analogous variant of
  \textsc{Subelection Isomorphism} (so if the former only allows
  deleting candidates, then so does the latter, etc.). We want to show
  that $S \leq_m^p M$, i.e., that $S$ many-one reduces to $M$ in
  polynomial time. Let $I_S = (E_1,E_2)$ be an instance of $S$, where
  $E_1$ is the smaller election. We form an instance
  $I_M = (E_1,E_2,t)$ of $M$, which uses the very same elections and
  where $t$ is set to be the size of $E_1$. This means that in $I_M$
  we cannot perform any operations on election $E_1$, because that
  would decrease its size below $t$. Hence, we can only perform
  operations on $E_2$, so the situation is the same as in the $S$
  problem and in the $I_S$ instance.
\end{proof}

\repeattheorem{Polynomial-time algorithm results}

\begin{proof}
  We first give an algorithm for \textsc{Max. Common Voter-Subelection}.
  Let $E = (C, V)$ and $F = (D,U)$ be our input elections and let $t$
  be the desired size of their isomorphic subelections. Since we are
  looking for a voter subelection, without loss of generality we may
  assume that $|C| = |D|$ (and we write $m$ to denote the number of candidates
  in each set). For each voter $v \in V$ and each voter
  $u \in U$ we perform the following algorithm:
  \begin{enumerate}
  \item Denoting the preference orders of $v$ and $u$ as
    $c_1 \pref_v c_2 \pref_v \cdots \pref_v c_m$ and
    $d_1 \pref_u d_2 \pref_u \cdots \pref_u d_m$, respectively, we
    form a bijection $\sigma \colon C \rightarrow D$ such that for
    each $c_i \in C$ we have $\sigma(c_i) = d_i$.
  \item We form a bipartite graph where the voters from $V$ form one
    set of vertices, the voters from $U$ form the other set of
    vertices, and there is an edge between voters $v' \in V$ and
    $u' \in U$ if $\sigma(v') = u'$.
  \item We compute the maximum cardinality matching in this graph and
    form subelections that consist of the matched voters. We accept if
    their size is at least $t$.
  \end{enumerate}
  If the algorithm does not accept for any choice of $v$ and $u$, we
  reject.

Very similar algorithms also work for the variants of \textsc{Max. Common
Voter-Subelection} with either one or both of the matchings: If we are
given the candidate matching, then we can omit the first step in the
enumerated algorithm above, and if we are given a voter matching then
instead of trying all pairs of voters $v$ and $u$ it suffices to try
all voters from the first election and obtain the other one via the
matching. Analogous algorithms also work for
\textsc{Voter-Subelection Isomorphism} (for all four matching cases)
and for all the other variants of \textsc{Subelection Isomorphism},
provided that the candidate matching is given.
\end{proof}

\repeatproposition{Subelection Isomorphism ETH-lowerbound}
\begin{proof}
  The $\xp(k)$ algorithm is a straightforward brute-force approach.
  For a fixed order of candidates in the smaller election,
  we create a matching with some $k$ candidates in the larger election.
  We try at most $m(m-1)\dots(m-k+1) \leq m^k$ cases.
  For a given matching, we solve a problem
  \textsc{Subelection Isomorphism with Candidate Matching} using
  a polynomial-time algorithm from Theorem~\ref{thm:poly-algo}.
  Hence the total running time is $O^*(m^k)$.

  For the hardness part, recall that ETH
  implies that there is no $|V|^{o(K)}$-time algorithm for finding
  size-$K$ clique on a graph with vertex set $V$~\cite{ChenHKX06}.
  In our reduction in Theorem~\ref{thm:sub-elec-np-hard} we have that
  the number of candidates in the larger election is $m = |V|+2$ and
  the number of candidates in the smaller election is $k = K+2$.
  Therefore, using an $O^*(m^{o(k)})$-time algorithm for
  \textsc{Subelection Isomorphism} we could solve \textsc{Clique} in
  time $m^{o(k)} \cdot \poly(k^3+m^3) = (|V|+2)^{o(K+2)} \cdot
  \poly(|V|) = |V|^{o(K)}$. This contradicts ETH.
\end{proof}

\repeattheorem{Cand-Subelection with voter matching hardness}
\begin{proof}
  It suffices to consider \textsc{Cand.-Subelection Isomorphism with
  Voter Matching}. We give a reduction from the \textsc{Exact Cover
  by 3-Sets} problem \textsc{(X3C)}. An instance of \textsc{X3C}
  consists of a set $X = \{x_1, \ldots, x_m\}$ of elements and a
  family $\calS = \{S_1, \ldots, S_n\}$ of three-element subsets of
  $X$. We ask if $\calS$ contains a subfamily $\calS'$ such that each
  element from $X$ belongs to exactly one set from $S'$. Given such
  an instance, we form two elections, $E_1$ and $E_2$, as follows.

  Election $E_1$ will be our smaller election and $E_2$ will be the
  larger one. We let $X$ be the candidate set for election $E_1$,
  whereas to form the candidate set of $E_2$ we proceed as follows.
  For each set $S_t = \{x_i, x_j, x_k\} \in \calS$, we introduce
  candidates $s_{i,t}$, $s_{j,t}$, and $s_{k,t}$. Intuitively, if some
  candidate $s_{u,t}$ will remain in a subelection isomorphic to
  $E_1$, then we will interpret this fact as saying that element $x_u$
  is covered by set $S_t$; this will, of course, require introducing
  appropriate consistency gadgets to ensure that $S_t$ also covers its
  other members. For each $i \in [m]$, we let $P_i$ be the set of all
  the candidates of the form $s_{i,t}$, where $t$ belongs to $[n]$ (in
  other words, each $P_i$ contains those candidates from $E_2$ that
  are associated with element $x_i$).

  \begin{example}
    Let $X=\{x_1,x_2,x_3,x_4,x_5,x_6\}$ and
    $\calS=\{S_1, \ldots, S_4\}$, where $S_1 = \{x_1,x_2,x_5\}$,
    $S_2 = \{x_1,x_3,x_6\}$, $S_3 = \{x_2,x_3,x_5\}$, and
    $S_4 = \{x_3, x_4,x_6\} \}$. Then
    $P_1 = \{s_{1,1}, s_{1,2}\}, P_2 = \{s_{2,1}, s_{2,3}\}, P_3 =
    \{s_{3,2}, $
    $ s_{3,3}, s_{3,4}\}, P_4 = \{s_{4,4}\}, P_5 = \{s_{5,1},
    s_{5,3}\}, P_6 = \{s_{6,2}, s_{6,4}\}$.
  \end{example}

  We denote the voter collection of election $E_1$ as
  $V = (v, v', v_1, v'_1, \ldots, v_{n}, v'_n)$ and the voter
  collection of $E_2$ as
  $U = (u, u', u_1, u'_1, \ldots, u_{n}, u'_n)$. Voters $v$ and $v'$
  are matched to voters $u$ and $u'$, respectively, and for each
  $i \in [2n]$, $v_i$ is matched to $u_i$ and $v'_i$ is matched to
  $u'_i$. The preference orders of the first two pairs of voters are:
  \begin{align*}
    v  &\colon x_1 \succ x_2 \succ \dots \succ x_m, &
    u  &\colon P_1 \succ P_2 \succ \dots \succ P_m, \\
    v' &\colon x_1 \succ x_2 \succ \dots \succ x_m, &
    u' &\colon
      \overleftarrow{P}_1 \succ \overleftarrow{P}_2 \succ \dots \succ
      \overleftarrow{P}_m.
  \end{align*}
  Next, for each $t \in [n]$ such that $S_t = \{x_i,x_j,x_k\}$,
  $i < j < k$, we let the preference orders of $v_{t}$, $v'_{t}$ and
  their counterparts from $U$ be as follows (by writing ``$\cdots$'' in
  the votes from $E_1$ we mean listing the candidates from
  $X \setminus \{x_i,x_j,x_k\}$ in the order of increasing indices, and for
  the voters from $E_2$ by ``$\cdots$'' we mean order
  $P_1 \pref P_2 \pref \cdots \pref P_m$ with $P_i$, $P_j$, and $P_k$
  removed):
  \begin{align*}
    v_{t} & \colon x_{i} \pref x_{j} \pref x_{k} \pref \cdots,\\
    u_{t} & \colon s_{i,t} \pref s_{j,t} \pref s_{k,t}
        \pref P_{i} \setminus \{s_{i,t}\} \pref P_{j} \setminus \{s_{j,t}\} \pref P_{k} \setminus \{s_{k,t}\}
        \pref \cdots, \\
    v'_{t} & \colon x_{k} \pref x_{j} \pref x_{i} \pref \cdots,\\
    u'_{t} & \colon s_{k,t} \pref s_{j,t} \pref s_{i,t}
        \pref
        P_{k} \setminus \{s_{k,t}\} \pref P_{j} \setminus \{s_{j,t}\} \pref
        P_{i} \setminus \{s_{i,t}\} \pref \cdots.
  \end{align*}
  This finishes our construction. It is clear that it is
  polynomial-time computable and it remains to show that it is
  correct.

  Let us assume that we have a \emph{yes}-instance of \textsc{X3C},
  that is, there is a family $\calS'$ of sets from $\calS$ such that
  each element from $X$ belongs to exactly one set from $\calS'$. We
  form a subelection $E'$ of $E_2$ by deleting all the candidates
  $s_{i,t}$ except for those for whom set $S_t$ belongs to $\calS'$.
  Then, let $\sigma$ be a function such that for each $x_i \in X$ we
  have $\sigma(x_i) = s_{i,t}$, where $S_t$ is a set from $\calS'$
  that contains $x_i$. Together with our voter matching, $\sigma$
  witnesses that $E_1$ and $E'$ are isomorphic.

  For the other direction, let us assume that $E_2$ has a subelection
  $E'$ that is isomorphic to $E_1$ and let $C'$ be its candidate set.
  First, we claim that for each $i \in [m]$ exactly one candidate from
  $P_i$ is included in $C'$. Indeed, if it were not the case, then $u$
  and $u'$ would not have identical preference orders, as required by
  the fact that they are matched to $v$ and $v'$. Second, we note
  that for each $i \in [m]$ the candidate matching that witnesses our
  isomorphism must match $x_i$ with the single candidate in
  $P_i \cap C'$.
  Finally, we claim that if some candidate $s_{i,t}$ is included in
  $C'$, where $S_t = \{x_i,x_j,x_k\}$, $i < j < k$, then candidates
  $s_{j,t}$ and $s_{k,t}$ are included in $C'$ as well. Indeed, if
  $s_{j,t}$ were not included in $C'$, then $u_t$ and $u'_t$ would
  rank the member of $P_i \cap C'$ and the member of $P_j \cap C'$ in
  the same order, whereas $v_j$ and $v'_j$ rank $x_i$ and $x_j$ in
  opposite orders (and, by the second observation, $x_i$ and $x_j$ are
  matched to the member of $P_i \cap C'$ and $P_j \cap C'$,
  respectively). The same argument applies to $x_{k,t}$.
  We say that a set $S_t = \{x_i, x_j, x_k\} \in \calS$ is selected by
  $C'$ if the candidates $s_{i,t}$, $s_{j,t}$, and $s_{k,t}$ belong to
  $C'$. By the above reasoning, we see that exactly $n/3$ sets are
  selected and that they form an exact cover of~$X$.
\end{proof}

\repeatproposition{Cand.-Subelection Isomorphism problem}

\begin{proof}
  We give a reduction from \textsc{Cand.~Subelection Isomorphism with
  Voter Matching}. Let the input instance be $(E_1,E_2)$, where the
  smaller election, $E_1$, has voter collection $(v_1, \ldots, v_n)$
  and the larger election, $E_2$, has voter collection
  $(u_1, \ldots, u_n)$. Further, for each $i \in [n]$ voter $v_i$ is
  matched with voter $u_i$. We form elections $E'_1$ and $E'_2$ in
  the following way. The candidate set of $E'_1$ is the same as that
  of $E_1$ except that it also includes candidates from the set
  $D = \{d_1, \ldots, d_{2n}\}$. Similarly, $E'_2$ contains the same
  candidates as $E_2$ plus the candidates from the set
  $F = \{f_1, \ldots, f_{2n}\}$. The voter collections of $E'_1$ and
  $E'_2$ are, respectively, $(v'_1, v''_1, \ldots, v'_n,v''_n)$ and
  $(u'_1, u''_1, \ldots, u'_n, u''_n)$. For each $i \in [n]$ these
  voters have the following preference orders (by writing $[v_i]$ or
  $[u_i]$ in a preference order we mean inserting the preference order
  of voter $v_i$ or $u_i$ in a given place:
  \begin{align*}
    v'_{i} \colon & d_{2i-1} \pref d_1 \pref \cdots \pref d_{2n} \pref [v_i], \\
    u'_{i} \colon & f_{2i-1} \pref f_1 \pref \cdots \pref f_{2n} \pref [u_i], \\
    v''_{i} \colon & d_{2i} \pref d_1 \pref \cdots \pref d_{2n} \pref [v_i], \\
    u''_{i} \colon & f_{2i} \pref f_1 \pref \cdots \pref f_{2n} \pref [u_i].
  \end{align*}
  We claim that $E'_1$ is isomorphic to a candidates subelection of
  $E'_2$ if and only if $E_1$ is isomorphic to a candidate subelection
  of $E_2$ with the given voter matching. In one direction this is
  clear: If $E_1$ is isomorphic to a subelection of $E_2$ with a given
  voter matching, then it suffices to use the same voter matching for
  the case of $E'_1$ and $E'_2$, and the same candidate matching,
  extended with matching each candidate $d_i$ to $f_i$.

  Next, let us assume that $E'_1$ is isomorphic to some subelection
  $E'$ of $E'_2$. By a simple counting argument, we note that $E'$
  must contain some candidates not in $F$. Further, we also note that
  it must contain all members of $F$. Indeed, each voter in $E'_1$ has
  a different candidate on top and this would not be the case in $E'$
  if it did not include all members of $F$ (if $E'$ did not include
  any members of $F$ then this would hold for each two votes $u'_i$
  and $u''_i$, and if $E'$ contained some members of $F$ but not all
  of them, then this would hold because each voter in $E'$
  would rank some member of $F$ on top, but there would be fewer
  members of $F$ than voters in the election).

  As a consequence, every candidate matching $\sigma$ that witnesses
  isomorphism between $E'_1$ and $E'$ matches some member of $D$ to
  some member of $F$. Further, we claim that for each $i \in [2n]$,
  $\sigma(d_i) = f_i$. For the sake of contradiction, let us assume
  that this is not the case and consider some $i \in [2n-1]$ for which
  there are $j$ and $k$ such that $\sigma(d_i) = f_j$,
  $\sigma(d_{i+1}) = f_k$ and $j > k$ (such $i$, $j$, $k$ must exist
  under our assumption). However, in $E'_1$, all but one voter rank
  $d_i$ ahead of $d_{i+1}$, whereas in $E'$ all but one voter rank
  $\sigma(d_{i+1})$ ahead of $\sigma(d_i)$. Thus $\sigma$ cannot
  witness isomorphism between $E'_1$ and $E'$.

  Finally, since for each $i \in [2n]$ we have that $d_i$ is matched
  to $f_i$, it also must be the case that for each $j \in [n]$ voters
  $v'_j$ and $v''_j$ are matched to $u'_j$ and $u''_j$, respectively
  (indeed, $v'_j$ is the only voter who ranks $d_{2j-1}$ on top, and
  $u'_j$ is the only voter who ranks $f_{2j-1}$ on top; the same
  argument works for the other pair of voters). As a consequence, we
  have that $E_1$ is isomorphic to a subelection of $E_2$ under the
  voter matching that for each $i \in [n]$ matches $v_i$ to $u_i$.
\end{proof}

\repeattheorem{Common-Cand with both matching hardness}

\begin{proof}[Proof of belonging to $\wone$]
  The proof of the other claims is included in the main text.
  Here, we show a reduction to \textsc{Clique} with equal value of the parameters which implies inclusion in $\wone$.
  Its idea is presented in the main text, so here we begin with its formal proof directly.

  Let $E_1 = (C,V_1)$ and $E_2 = (C,V_2)$ be our input elections and let $k$ be the number of candidates in maximum isomorphic candidate subelections (since we are in the ``with candidate matching'' regime, we take the candidate sets to be equal).
  Let $m = |C|$, $n = |V_1| = |V_2|$ (since we cannot remove the voters).

  We create an instance $(G,k)$ of \textsc{Clique} as follows.
  We define $G$ as having vertices corresponding to candidates, i.e., $V(G) = C$.
  We construct the set of edges by beginning from a complete graph and removing some of them as follows.
  For every two matched voters $v$ and $u$ and every two candidates $x$ and $y$ such that $x \pref_v y$ and $y \pref_u x$, we remove an edge $\{x,y\}$ from the graph.
  It is clear that the reduction can be computed in polynomial time and both parameters have the same value.
  It remains to show its correctness.

  First, let us assume that there are subelections
  $E'_1$ and $E'_2$ of $E_1$ and $E_2$, respectively, each with
  candidate set $K$, such that $|K| \geq k$ and $E'_1$ and $E'_2$ are
  isomorphic via the given matchings.
  It must be the case that the vertices from $K$ form a clique.
  Indeed, if $K$ contained two vertices $x$ and $y$ that were not connected by an edge,
  then an edge $\{x,y\}$ had to be removed by some two matched voters
  $v$ and $u$ such that $x \pref_v y$ and $y \pref_u x$.
  Since both voters belong to subelections $E'_1$ and $E'_2$, we obtain a contradiction that they are isomorphic via the given matchings.

  For the other direction, let us assume that $G$ has a size-$k$ clique.
  Let $K$ be the set of this clique's vertices.
  We form elections $E'_1$ and $E'_2$ by restricting $E_1$ and $E_2$ to the candidates from $K$.
  To verify that $E'_1$ and $E'_2$ are isomorphic via the given matchings, let us consider an arbitrary pair of matched voters $v$ and $u$ and arbitrary pair of candidates $x,y \in K$.
  It follows that $x \pref_v y$ and $x \pref_u y$.
  Otherwise an edge $\{x,y\}$ would be removed during the reduction, hence $K$ would not be a clique.
  A contradiction.
\end{proof}

\repeatproposition{Max-Cand-Both XP}

\begin{proof}
  Note that in the input of the problem we have the same number of candidates and the same number of voters in both elections.
  The algorithm simply guesses $k$ candidates from one of the input elections and checks if both obtained candidate-subelections restricted to chosen $k$ candidates are the same.
  There are $\binom{m}{k}$ many choices for $k$ candidates, hence the total running time is bounded by $O^*(m^k)$.

  Under ETH there is no $|V|^{o(K)}$-time algorithm for finding a size-$K$
  clique on a graph with vertex set $V$~\cite{ChenHKX06}.
  In our reduction in Theorem~\ref{thm:common-cand-both-np-hard} we have
  $m = |V|$ many candidates and required candidate set size is $k=K$.
  Therefore, using an $O^*(m^{o(k)})$-time algorithm for
  \textsc{Max. Common Cand.-Subelection with Both Matchings}
  we could solve \textsc{Clique} in time
  $m^{o(k)} \cdot \poly(m^2) \leq |V|^{o(k)}$. This contradicts ETH.
\end{proof}

\repeatproposition{Max-Cand-Both approximation}

\begin{proof}
  First, notice that we use a convention that an approximation ratio $\alpha$ is at least $1$ also for a maximization problem.
  Therefore, if {\sc Alg} is a value of a solution returned by an algorithm and {\sc Opt} is the value of an optimal solution then $\frac{\textsc{Opt}}{\textsc{Alg}} \leq \alpha$ implies that the solution is an $\alpha$-approximation of an optimal one.

  Let us describe the algorithm.
  For any fixed natural number $c \geq 1$ we check all size-$c$ subsets of candidates as a solution.
  There are at most $\binom{m}{c} \leq m^c$ many such subsets (i.e. polynomial many).
  We evaluate each of them in polynomial time.
  At least one of such trial is a subset of an optimal solution, so it is a feasible solution of size $c$.
  Therefore the algorithm has approximation ratio at most $\frac{\textsc{Opt}}{\textsc{Alg}} = \frac{t}{c}$.

  It is $\np$-hard to approximate \textsc{Clique} within factor $|V|^{1-\epsilon}$ for every $\epsilon > 0$, where $V$ is a given set of vertices~\cite{Zuckerman07}.
  In the reduction in Theorem~\ref{thm:common-cand-both-np-hard}
  the number of candidates $m$ equals to the number of vertices of the graph, hence $t \leq m = |V|$.
  Therefore, using an $t^{1-\epsilon}$-approximation algorithm
  for \textsc{Max. Common Cand.-Subelection with Both Matchings},
  for some $\epsilon > 0$, we could approximate \textsc{Clique} within
  $|V|^{1-\epsilon}$ factor. This implies $\p = \np$.
\end{proof}

As a comment we point that the above running time lower-bound (Proposition~\ref{prop:com-cand-both-xp}) and hardness of approximation (Proposition~\ref{prop:com-cand-both-apx}) can be combined under a stronger hypothesis.
The work of~\citet{ChalermsookCKLM17} give an evidence that for \textsc{Clique}, where $K$ denotes the size of maximum clique, $K^{1-\epsilon}$-approximation is not possible even in time FPT with respect to $K$ assuming gap version of ETH called Gap-ETH.
Hence, the same hardness holds to \textsc{Max. Common Cand.-Subelection with Both Matchings}.
It means, that the best way to solve both problems, even approximately, is to essentially enumerate all possibilities~\cite{ChalermsookCKLM17}.

\repeatproposition{Max. Common Cand.-Subelection problem with candidate matching}

\begin{proof}
  Below we give the reductions for all the three cases, i.e., the case
  with a given candidate matching, with a given voter matching, and
  without any matchings.

  \paragraph{The case with a given candidate matching.}
  We give a reduction from \textsc{Max. Common Cand.-Subelection with
  both Matchings}. Let $E_1 = (C,V)$ and $E_2 = (C,U)$ be our input
  elections, where $V = (v_1, \ldots, v_n)$ and
  $U = (u_1, \ldots, u_n)$, and let $t$ be the desired size of the
  isomorphic subelection (since we are in the setting with both
  matchings, we can assume that both elections are over the same
  candidate set). We assume that for each $i \in [n]$ voter $v_i$ is
  matched to $u_i$. Let $m = |C|$ and let $k = t/n$. We note that
  $k \leq m$.

  Our construction proceeds as follows. First, we form $m+1$ sets,
  $A$, $D_1, \ldots, D_m$, each containing $m+1$ new candidates. Let
  $\calD = A \cup D_1 \cup \cdots \cup D_m$. Note that
  $|\calD| = (m+1)^2$. We form elections $E'_1 = (C \cup \calD, V')$
  and $E'_2 = (C \cup \calD, U')$, where $V' = (v'_1, \ldots, v'_n)$
  and $U' = (u'_1, \ldots, u'_n)$. For each $i \in [n]$, we set their
  preference orders as follows (by writing $[v_i]$ or $[u_i]$ we mean
  copying the preference order of the respective voter):
  \begin{align*}
    v'_i \colon& D_1 \pref \cdots \pref D_{i-1} \pref A \pref D_{i} \pref \cdots \pref D_{m} \pref [v_i],\\
    u'_i \colon& D_1 \pref \cdots \pref D_{i-1} \pref A \pref D_{i} \pref \cdots \pref D_{m} \pref [u_i].
  \end{align*}
  Finally, we set the desired size of the isomorphic subelections
  to be $t' = n \cdot (k + (m+1)^2) = t + n(m+1)^2$.

  We claim that $E'_1$ and $E'_2$ have isomorphic candidate
  subelections of size $t'$ for the given candidate matching if and
  only if $E_1$ and $E_2$ have isomorphic candidate subelections of
  size $t$ for given candidate and voter matchings.

  Let us assume that $E'_1$ and $E'_2$ have the desired candidate
  subelections, $E''_1$ and $E''_2$. We claim that their isomorphism
  is witnessed by such a matching that for each $i$ voter $v'_i$ is
  matched to $u'_i$. If it were not the case, then to maintain the
  isomorphism these subelections would have to lose at least $m-1$
  candidates from $\calD$ (e.g., the candidates from $A$) and their
  sizes would be, at most, $n(m+m(m+1)) = n((m+1)^2-1) < t'$. Thus the
  isomorphism of $E''_1$ and $E''_2$ is witnessed by the same voter
  matching as the one required by our input instance. A simple
  counting argument shows that after dropping candidates from $\calD$
  from subelections $E''_1$ and $E''_2$, we obtain elections
  witnessing that $(E_1,E_2)$ is a \emph{yes}-instance of
  \textsc{Max. Common Cand.-Subelection with both Matchings}. The
  reverse direction is immediate.

  \paragraph{The case with a given voter matching.}
  This case follows by Proposition~\ref{prop:reduction} and the fact that
  \textsc{Cand.-Subelection with Voter Matching} is $\np$-complete.

  \paragraph{The case without any given matchings.}
  This case follows by Proposition~\ref{prop:reduction} and the fact
  that \textsc{Cand.-Subelection Isomorphism} problem is
  $\np$-complete.
\end{proof}

\repeatproposition{All 4 MCS}

\begin{proof}
  For the case without any matchings and the case with the voter
  matching, we use Proposition~\ref{prop:reduction} to reduce from the
  corresponding variant of \textsc{Subelection Isomorphism}.
  For the variants that include the candidate matching (for which
  \textsc{Subelection Isomorphism} is in $\p$), we reduce from the
  corresponding variants of \textsc{Max. Common Cand.-Subelection}.
  Let $E_1 = (C,V_1)$ and $E_2 = (C,V_2)$ be our input elections and
  let $t$ be the desired size of their isomorphic candidate
  subelections (since we are in the ``with candidate matching''
  regime, we take the candidate sets to be equal). Without loss of
  generality, we can assume that $|V_1| = |V_2|$; our
  $\np$-completeness proofs for \textsc{Max. Common Cand.-Subelection}
  give such instances.

  Let $m = |C|$, $n = |V_1| = |V_2|$, and let $D$ be a set of $(n-1)m$
  dummy candidates. We form elections $E'_1$ and $E'_2$ to be
  identical to $E_1$ and $E_2$, respectively, except that they also
  include the candidates from $D$, who are always ranked on the
  bottom, in the same order.
  Hence, the number of candidates in $E'_1$ and $E'_2$ equals $nm$.
  We ask if $E'_1$ and $E'_2$ have
  isomorphic subelections of size $t' = t + n(n-1)m$.

  If $E_1$ and $E_2$ have isomorphic candidate subelections of size
  $t$, then certainly $E'_1$ and $E'_2$ have isomorphic subelections
  of size $t'$ (it suffices to take the same subelections as for
  $E_1$ and $E_2$ and include the candidates from $D$).

  On the other hand,
  if $E'_1$ and $E'_2$ have isomorphic subelections of size
  $t'$, then $E_1$ and $E_2$ have size-$t$ isomorphic candidate
  subelections. Indeed, the subelections of $E'_1$ and $E'_2$ must
  include all the $n$ voters. Otherwise their sizes would
  be at most $(n-1)mn < t+(n-1)mn \leq t'$.
  Thus the subelections of $E'_1$ and $E'_2$ are candidate
  subelections. As we can also assume that the subelections of
  $E'_1$ and $E'_2$ include all the candidates from $D$, by omitting
  these candidates we get the desired candidate subelections of $E_1$
  and~$E_2$.
\end{proof}

\section{ILP for Maximum Common Voter-Subelection}\label{appendix:ilp}
Since the polynomial-time algorithm for \textsc{Max. Common
Voter-Subelection} is fairly slow, we express the problem as an
ILP. We have two types of variables:
\begin{enumerate}
\item For each pair of voters $v \in V$ and $u \in U$, we have a binary
  variable $N_{v,u}$. If it is set to $1$, then we interpret it as
  saying that voter $v$ is included in the subelection of $E$,
  voter $u$ is included in the subelection of $F$, and the two
  voters are matched. Value~$0$ means that the preceding statement
  does not hold.
\item For each pair of candidates $c \in C$ and $d \in D$, we have a
  binary variable $M_{c,d}$. If it is set to $1$ then we interpret it
  as saying that $c$ is matched to $d$ in the isomorphic subelections
  (note that, since we are looking for voter subelections, every
  candidate from $C$ has to be matched to some candidate from $D$, and
  the other way round).
\end{enumerate}
To ensure that variables $N_{v,u}$ and $M_{c,d}$ describe the
respective matchings, we have the following basic constraints:
\begin{align*}
  &\textstyle \sum_{u \in U}N_{v,u} \leq 1, \;\; \forall v \in V,&
  &\textstyle \sum_{d \in D}M_{c,d} = 1, \;\; \forall c \in C, \\
  &\textstyle \sum_{v \in V}N_{v,u} \leq 1, \;\; \forall u \in U,&
  &\textstyle \sum_{c \in C}M_{c,d} = 1, \;\; \forall d \in D.
\end{align*}
For each pair of voters $v \in V$, $u \in U$ and each pair of
candidates $c \in C$ and $d \in D$, we introduce constant
$w_{v,u,c,d}$ which is set to $1$ if $v$ ranks $c$ on the same
position as $u$ ranks $d$, and which is set to $0$ otherwise. We use
these constants to ensure that the matchings specified by variables
$N_{v,u}$ and $M_{c,d}$ indeed describe isomorphic
subelections. Specifically, we have the following constraints (let
$m = |C| = |D|$):
\begin{align*}
  \textstyle \sum_{c \in C}\sum_{d \in D} w_{v,u,c,d}\cdot M_{c,d} \geq m \cdot N_{v,u}, \quad \forall v \in V, u \in U.
\end{align*}
For each $v \in V$ and $u \in U$, they ensure that if $v$ is matched
to $u$ then each candidate $c$ appears in $v$ on the same position as
the candidate matched to $c$ appears in $u$.

\end{document}